\documentclass[prl,twocolumn,english,floatfix,superscriptaddress,nofootinbib,longbibliography,aps]{revtex4-2}

\usepackage[english]{babel}
\usepackage{textcomp,xcolor,natbib}
\usepackage{dcolumn}
\usepackage{graphicx}
\graphicspath{{Figures/}}

\usepackage{amsmath, bbm}
\usepackage{latexsym}
\usepackage{amsfonts}   
\usepackage{amssymb}
\usepackage{array}      
\usepackage{epsfig}
\usepackage{txfonts}
\usepackage{xcolor,braket}
\usepackage[colorlinks=true,linkcolor=blue,urlcolor=blue,citecolor=blue]{hyperref}
\usepackage[normalem]{ulem}
\usepackage{soul}
\usepackage{dsfont}

\usepackage{xfrac}  
\usepackage{relsize}

\usepackage{amsthm}
\usepackage{mathrsfs, mathtools, amsmath}

\usepackage[normalem]{ulem}
\usepackage{cancel}
\usepackage{enumitem}

\usepackage{xspace}
\usepackage{orcidlink}

\newcommand{\ketbra}[1]{{\ket{#1}\bra{#1}}}
\newcommand{\id}{{\mathbbm{1}}}
\newcommand{\abs}[1]{\left\lvert #1 \right\rvert}
\newcommand{\norm}[1]{\left\lVert #1 \right\rVert}
\newcommand{\tr}{{\operatorname{tr}}}

\newcommand{\schro}{Schr\"odinger\xspace}
\newcommand{\heis}{Heisenberg\xspace}

\newtheorem{theorem}{Theorem}
\newtheorem{corollary}{Corollary}
\newtheorem{example}{Example}
\newtheorem{proposition}{Proposition}

\definecolor{lavender}{rgb}{0.75, 0.58, 0.89}

\begin{document}
\title{Revivals of Bell nonlocality require Schr\"odinger and Heisenberg non-Markovianity}

\author{Federico Settimo\, \orcidlink{0000-0002-0123-6950}}
\email{fesett@utu.fi}
\affiliation{Department of Physics and Astronomy,
University of Turku, FI-20014 Turun yliopisto, Finland}

\author{Kimmo Luoma\, \orcidlink{0000-0003-3118-612X}}
\affiliation{Department of Physics and Astronomy,
University of Turku, FI-20014 Turun yliopisto, Finland}

\author{Jyrki Piilo\, \orcidlink{0000-0002-5595-873X}}
\affiliation{Department of Physics and Astronomy,
University of Turku, FI-20014 Turun yliopisto, Finland}

\author{Andrea Smirne\, \orcidlink{0000-0003-4698-9304}}
\affiliation{Dipartimento di Fisica ``Aldo Pontremoli'', Universit{\`a} degli Studi di Milano, Via Celoria 16, I-20133 Milan, Italy}
\affiliation{Istituto Nazionale di Fisica Nucleare, Sezione di Milano, Via Celoria 16, I-20133 Milan, Italy}

\author{Bassano Vacchini\, \orcidlink{0000-0002-7574-9951}}
\affiliation{Dipartimento di Fisica ``Aldo Pontremoli'', Universit{\`a} degli Studi di Milano, Via Celoria 16, I-20133 Milan, Italy}
\affiliation{Istituto Nazionale di Fisica Nucleare, Sezione di Milano, Via Celoria 16, I-20133 Milan, Italy}

\author{Dariusz Chru\'sci\'nski \orcidlink{0000-0002-6582-6730}}
\affiliation{Institute of Physics, Faculty of Physics, Astronomy and Informatics,
Nicolaus Copernicus University, Grudziadzka 5/7, 87-100 Toru\'{n},
Poland}

\begin{abstract}
    {Bell nonlocality is a key resource in quantum information, demonstrating the nonclassicality of quantum theory.
    Noise, however, {is in general detrimental to} nonlocality, and can cause the loss of the ability to violate any Bell inequality.
    Memory effects, on the other hand, can restore {this} quantumness and, as recently shown, they can be {differently characterized} in the Schr\"odinger and in the Heisenberg picture.
    Here, we show that if memory effects allow for revivals in time of nonlocality, then the dynamics must be non-Markovian in both pictures.
    We showcase our findings through a device-independent quantum key distribution task, for which Bell nonlocality is necessary.}
\end{abstract}

\maketitle


A striking feature of quantum mechanics is that spatially separated
systems can exhibit correlations that defy any classical explanation.
Such nonlocal correlations can be witnessed by the violation of a Bell
inequality \cite{Bell1964, Clauser1969, Bell2004}.
This feature jointly builds on two peculiar quantum
concepts, that do not have any classical counterpart.  First, quantum
systems can be entangled: the global state that cannot be obtained from the local ones, even allowing for classical correlations.
Second, measurements on a quantum system do not need to be compatible: there exist pairs of measurements whose statistics cannot be obtained by marginalizing over a third larger measurement \cite{Ghne2023}.
As it is well known, both entanglement and incompatibility are required to violate any Bell inequality \cite{Fine1982}.

When dealing with realistic systems, interactions with the surrounding environment will be unavoidably present \cite{Breuer-Petruccione, Vacchini-OQS}.
Both these uncontrolled interactions and any local operation or classical communication will diminish both entanglement \cite{Bennett1996, Horodecki2009} and incompatibility \cite{Heinosaari2015-im}, causing the loss of the ability to violate any Bell inequality.
Nevertheless, open system dynamics can present memory effects, known as non-Markovianity \cite{BLPV-colloquium}.
In a recent work, we demonstrated that such memory effects are differently formulated for the 
\schro or the \heis evolutions \cite{Settimo-SchroHeis}.
If the dynamics is non-Markovian in the \schro picture, then entanglement can be restored over time \cite{RHP}.
If, instead, memory is present in the \heis picture, then incompatibility can undergo revivals \cite{Settimo-SchroHeis}.

Beyond its fundamental interest, the possibility of violating any Bell inequality allows for one of the most successful tasks in quantum information, namely, quantum key distribution:
two parties can perform quantum cryptography in a secure way by sharing a secret key that allows them to send encrypted messages \cite{Ekert1992}.
In order to perform this task, the two parties do not need to trust their local devices: it is sufficient that they obey the laws of quantum mechanics \cite{Acin2007, Scarani2009, Vazirani2014}.
This protocol is known as device-independent quantum key distribution (DIQKD).
For its enforcement, the two parties must be able to violate a Bell inequality, thereby concluding that the protocol is robust against eavesdropping. The feasibility and effectiveness of this strategy has already been experimentally validated \cite{Nadlinger2022, Liu2022, Zhang2022}.


In this Letter, we show that revivals of either entanglement or incompatibility alone 
are not sufficient: to recover or increase Bell nonlocality, the dynamics needs to be non-Markovian both in the \schro and in the \heis picture.
We further provide explicit examples in which Markovianity of the dynamics in at least one picture necessarily implies a decrease in time of nonlocality.
We finally demonstrate that to restore the effectiveness of DIQKD by increasing the transmitted key rate, the dynamics must be non-Markovian in both pictures.
In a companion paper \cite{nonMarkov_tasks}, we investigate the roles of \schro and \heis non-Markovianity in different relevant quantum information tasks, including channel capacity and channel discrimination.


{\it Bell nonlocality.}
We consider a scenario in which two parties (Alice and Bob) share an entangled state $\rho$ and their task is to violate a Bell-CHSH inequality \cite{Clauser1969}
\begin{equation}
    \label{eq:Bell_ineq}
    S_{A,A^\prime,B,B^\prime}(\rho) = \abs{\braket{AB}_\rho + \braket{A B^\prime}_\rho + \braket{A^\prime B}_\rho - \braket{A^\prime B^\prime}_\rho}\geqslant 2,
\end{equation}
where $\braket{AB}_\rho = \tr[(A\otimes B)\,\rho]$ and $A$, $A^\prime$ ($B$, $B^\prime$) represent {binary} measurements performed by Alice (Bob), having outcomes $a,a^\prime\, (b,b^\prime)\in\{-1,+1\}$.
{In the following, we assume that Alice's half of the entangled state undergoes some noisy evolution, described by a completely positive and trace preserving (CPTP) map $\Phi$, while Bob's half is preserved.
The restriction of noise acting only on Alice's side is done for simplicity, and our} results also generalize in a straightforward way to multiple parties scenarios \cite{Plavala2024} or generalized Bell inequalities with more than two settings \cite{Collins2004}.
{Due to noise, the shared state is then given by} $\rho^\prime = (\Phi\otimes\operatorname{id})[\rho]$.
Alternatively, one can describe the {process} in the Heisenberg picture, 
in which the evolution of Alice's observables is described by the map $\Phi^*$, namely the completely positive and unital (CPU) adjoint map of $\Phi$ \cite{heinosaari-ziman}.


If one is able to violate at least one Bell inequality using the state $\rho$, then this state is said to be {\it nonlocal}; otherwise, it is said to be local \cite{Brunner2014}.
In this work, we are interested in the properties of the channel $\Phi$, irrespective of the particular initial state prepared.
Hence, we are interested in whether the channel $\Phi$ is {\it nonlocality breaking} (NLB), which means that $(\Phi\otimes\operatorname{id})[\rho]$ does {not} violate any Bell inequality, whatever $\rho$ \cite{Pal2015,Heinosaari2015-ibc}.
Otherwise, $\Phi$ is said to be {\it nonlocality preserving} (NLP).
Alternatively, we have that $\Phi$ is NLB if and only if 
\begin{equation}
    \label{eq:optimal_Bell}
    S(\Phi) = {\max_\rho\max_{A,A^\prime,B,B^\prime}} S_{\Phi^*[A], \Phi^*[A^\prime],B,B^\prime}(\rho)\leqslant 2.
\end{equation}
It is well known that in the absence of noise Alice and Bob can obtain the maximal violation of the inequality, namely $S(\operatorname{id}) =2\sqrt2>2$ \cite{tsirelson-bound}.

In order to be able to violate any Bell inequality \eqref{eq:Bell_ineq}, the two parties must share an entangled state and be able to perform incompatible measurements \cite{Fine1982}.
In particular, for any pair of incompatible measurements, it is always possible to find an entangled state that allows for the violation of a Bell inequality \cite{Wolf2009}.
The converse is not true, since there exist non-separable states that do not allow for the violation of any Bell inequality \cite{Werner1989}.
It is possible to quantify the amount of incompatibility via the so-called incompatibility monotones, i.e. functions $I$ of POVMs such that $I(A,A^\prime)=0$ if and only if $A$ and $A^\prime$ are compatible and contractive under CPU maps \cite{Heinosaari2015-im}.
{In the special case of unbiased binary POVMs acting on qubits, $I(A, A^\prime)=0$ is equivalent to \cite{Busch1986, Grinko2025}
\begin{equation}
    \label{eq:incop_qubit}
    \norm{\mathbf a+\mathbf a^\prime}+\norm{\mathbf a-\mathbf a^\prime}\geqslant 2,
\end{equation}
where $\mathbf a$ and $\mathbf a^\prime$ are the Bloch vectors associated to the operators $A$ and $A^\prime$ respectively.}
{A CPU map} $\Phi^*$ is said to be {\it incompatibility breaking} (IB) if \cite{Heinosaari2015-ibc}
\begin{equation}
    \label{eq:incompatibility_breaking}
    I\big(\Phi^*[A],\Phi^*[A^\prime]\big)=0\qquad\forall A,A^\prime.
\end{equation}
In analogy to Eq.~\eqref{eq:incompatibility_breaking}, a channel is said to be {\it entanglement breaking} (EB) if for any bipartite state $\rho$, $(\Phi\otimes\operatorname{id})[\rho]$ is separable \cite{Horodecki2003}.

Since for any pair of incompatible measurements it is possible to find a state that violates a Bell inequality \cite{Wolf2009}, 
and since $\Phi$ being EB implies that $\Phi^*$ is IB \cite{Heinosaari2015-ibc}, considering binary measurements we have \cite{Kumari2023}:
\begin{proposition}
    A channel $\Phi$ is NLP if and only if $\Phi^*$ is not IB.
\end{proposition}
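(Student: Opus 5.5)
We prove the two implications separately, working throughout with binary (dichotomic) measurements as in Eq.~\eqref{eq:Bell_ineq}. Both directions rest on the rewriting in Eq.~\eqref{eq:optimal_Bell}: moving the noise into the Heisenberg picture, $S(\Phi)>2$ holds exactly when some state $\rho$, some pair of Bob observables, and some pair of evolved Alice observables $\Phi^*[A],\Phi^*[A^\prime]$ violate CHSH.

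\emph{NLP $\Rightarrow$ $\Phi^*$ not IB.} We argue by contraposition. Suppose $\Phi^*$ is IB. Then for every pair of binary observables $A,A^\prime$, the evolved pair $\Phi^*[A],\Phi^*[A^\prime]$ is jointly measurable, so there is a parent POVM $\{G_{a,a^\prime}\}$ whose marginals reproduce them. Substituting this parent into $\langle \Phi^*[A]\otimes B\rangle_\rho$ and the other three correlators gives a local hidden variable model for Alice's side. Bob's two settings can be handled by letting his outcome depend on the hidden variable through the conditional states $\rho_B^{a,a^\prime}\propto \tr_A[(G_{a,a^\prime}\otimes\id)\rho]$. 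This is the standard argument of Fine~\cite{Fine1982}, namely that one compatible side suffices for locality. Hence $S\leqslant 2$ for all $\rho$, $B$ and $B^\prime$, so $\Phi$ is NLB. The only care needed is checking that the conditional states yield bona fide local response functions for Bob, which is routine.

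\emph{$\Phi^*$ not IB $\Rightarrow$ NLP.} Suppose $\Phi^*$ is not IB. Then there exist binary $A,A^\prime$ such that $\tilde A=\Phi^*[A]$ and $\tilde A^\prime=\Phi^*[A^\prime]$ are incompatible. Both are valid binary POVMs, since $\Phi^*$ is CPU and therefore maps effects to effects. We then invoke the result of Wolf, Perez-Garcia and Fernandez~\cite{Wolf2009}: any incompatible pair of binary POVMs violates CHSH for some entangled state $\rho$ and some Bob observables $B,B^\prime$. Taking these as the optimizers in Eq.~\eqref{eq:optimal_Bell} gives $S(\Phi)>2$, so $\Phi$ is NLP. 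Two points must be verified here. First, the quantities in Eq.~\eqref{eq:optimal_Bell} only involve Alice's observables through $\Phi^*$, which is the definition of the adjoint. Second, the maximization over $\rho$ is unrestricted, so the state supplied by~\cite{Wolf2009} is admissible; it may need an ancilla of suitable dimension on Bob's side, which is allowed.

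The main obstacle is this second direction. Its content is entirely the theorem of~\cite{Wolf2009}, and that theorem applies specifically to pairs of binary measurements. For this reason the statement is restricted to binary measurements, and I would not attempt to reprove it. The remark that EB implies IB~\cite{Heinosaari2015-ibc} serves only as a consistency check: an EB channel is automatically IB, so it is consistently classified as NLB in both pictures.
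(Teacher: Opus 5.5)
Your proposal is correct and follows essentially the paper's route. For necessity, you rely on the fact that a jointly measurable pair $\Phi^*[A],\Phi^*[A^\prime]$ on Alice's side admits a local model; you build that model explicitly from the parent POVM and Bob's conditional states, whereas the paper just cites \cite{Fine1982}. For sufficiency, you apply \cite{Wolf2009} to the Heisenberg-evolved incompatible pair, exactly as the paper does, and your treatment of the EB$\Rightarrow$IB remark as a consistency check rather than a logical ingredient is accurate.
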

As a consequence, if Alice and Bob are able to prepare arbitrary states and perform arbitrary measurements, they can violate Bell inequalities if and only if the channel $\Phi^*$ preserves incompatibility.
IB channels have been widely studied in the literature in connection to the ability to perform Einstein-Podolsky-Rosen steering \cite{Heinosaari2015-ibc}.
This proposition provides an alternative characterization of IB channels as also being NLB for the special case of binary measurements.

{\it \schro and \heis divisibility.}
We now consider a dynamical process acting locally only on Alice's side, so that the noise at time $t$ will be described by a CPTP dynamical map $\Phi_t$.
In the \schro picture, it is possible to describe the channel at later times from the channel at earlier times $s\leqslant t$ via the propagator $\Phi^S_{t,s}$ obeying \cite{BLPV-colloquium, rivas-quantum-nm, Chruscinski2022}
\begin{equation}
    \label{eq:divisibility_S}
    \Phi_t = \Phi^S_{t,s}\circ\Phi_s,\qquad 
\end{equation}
so that for any state $\rho$ it holds $\rho(t) = \Phi^S_{t,s}[\rho(s)]$.
In the \heis picture, instead, the forward propagator reads \cite{Settimo-SchroHeis}
\begin{equation}
    \label{eq:divisibility_H}
    \Phi_t^* = {\Phi^H_{t,s}}^*\circ\Phi_s^*,
\end{equation}
and for any observable $X$ it holds $X(t) = {\Phi^H_{t,s}}^*[X(s)]$.

Assuming that the dynamics is invertible at time $s$, the propagators can be written as
\begin{equation}
    \Phi_{t,s}^S = \Phi_t\circ\Phi_s^{-1},\qquad {\Phi_{t,s}^H}^* = \Phi_t^*\circ \left(\Phi_s^*\right)^{-1}
\end{equation}
and are related by
\begin{equation}
    \label{eq:connecting_props}
    \Phi_{t,s}^H = \Phi_s^{-1}\circ \Phi_{t,s}^S\circ \Phi_s.
\end{equation}
Although both $\Phi_s$ and $\Phi_t$ are CP maps, the propagators $\Phi^S_{t,s}$ and ${\Phi_{t,s}^H}^*$ do not need to be.
In particular, if $\Phi^S_{t,s}$ is CP, the dynamics is said to be \schro { CP-divisible}, while if ${\Phi_{t,s}^H}^*$ is CP, the dynamics is said to be \heis CP-divisible.
Otherwise, if $\Phi^S_{t,s}$ (${\Phi_{t,s}^H}^*$) is not CP, the dynamics is said to be \schro (\heis) non-Markovian.
Violations of CP-divisibility in either picture have been connected to memory effects being present in the dynamics \cite{BLP, BLP-PRA, RHP, Settimo-SchroHeis}, with non-Markovianity in the two pictures that needs not to be equivalent, so that there exist dynamics that are non-Markovian only in one picture \cite{Settimo-SchroHeis, Carollo2026}.

Furthermore, violations of positivity of the propagators can be witnessed via revivals of suitable norms.
In particular, $\Phi_{t,s}^S$ is positive if and only if \cite{Kossakowski-necessary, Wimann2012}
\begin{equation}
    \label{eq:contractivity_TD}
    \norm{\Phi_t[X]}_1\leqslant\norm{\Phi_s[X]}_1\qquad \forall X,
\end{equation}
with $\norm{X}_1 = \tr\abs X$.
From the point of view of Bell {nonlocality}, non-Markovianity is necessary for revivals in time of the two quantities ruling the necessary conditions for {violations of the} inequality.
For instance, if  there exists a bipartite state $\rho$ such {that} the entanglement at time $t_2$ is larger than its entanglement at time $t_1$ 
\begin{equation}
    \label{eq:revival_ent}
    E\big((\Phi_t\otimes\operatorname{id})[\rho]\big) > E\big((\Phi_s\otimes\operatorname{id})[\rho]\big),
\end{equation}
then one can conclude that the dynamics is \schro non-Markovian \cite{RHP}.
Here, $E$ is {an} entanglement monotone \cite{Horodecki2009} and in the following we fix it to be the concurrence \cite{Hill1997, Wootters1998}.
On the other hand, if revivals in time of incompatibility are present for some POVMs $A$, $A^\prime$, {namely}
\begin{equation}
    \label{eq:revival_incop}
    I\big(\Phi_t^*[A], \Phi_t^*[A^\prime]\big) > I\big(\Phi_s^*[A], \Phi_s^*[A^\prime]\big),
\end{equation}
then the dynamics must be \heis non-Markovian \cite{Settimo-SchroHeis}.
Therefore, if the process is CP-divisible in both pictures, Bell nonlocality will diminish over time.
Notice that, due to the inequivalence of non-Markovianity in the two pictures, there exist dynamics presenting revivals of incompatibiity but not of entanglement, and viceversa.


{\it Necessary condition for revival of nonlocality.}
In order to violate Bell inequality \eqref{eq:Bell_ineq}, the two parties need to share an entangled state and perform incompatible measurements on it.
We now proceed to show that, even if \schro (\heis) non-Markovianity can increase entanglement (incompatibility), each non-Markovianity separately is not sufficient to increase nonlocality.

We present the main result of this Letter in the following Theorem.
\begin{theorem}
    \label{th:contractivity_nonlocality}
    Consider a process described by a family of CPTP maps $\{\Phi_t\}_t$ and let $t\geqslant s$.
    If either $\Phi^S_{t,s}$ or ${\Phi^H_{t,s}}^*$ is CP, then $S(\Phi_t) \leqslant S(\Phi_s)$.
\end{theorem}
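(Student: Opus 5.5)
Both cases come from the definition of $S(\Phi)$ in Eq.~\eqref{eq:optimal_Bell}, written either in the \schro form (optimize over the state $(\Phi\otimes\operatorname{id})[\rho]$) or in the \heis form (optimize over the evolved observables $\Phi^*[A],\Phi^*[A^\prime]$). The plan is to show that a CP propagator maps an optimal configuration at time $t$ to an admissible configuration at time $s$ with the same Bell value. Since $S(\Phi_s)$ is a maximum over all admissible configurations, the inequality follows. Throughout, fix $\rho$, $A,A^\prime,B,B^\prime$ that attain (or approach) $S(\Phi_t)$.

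\emph{\schro case.} Suppose $\Phi^S_{t,s}$ is CP. It is also trace preserving, because $\Phi_t=\Phi^S_{t,s}\circ\Phi_s$ with both $\Phi_t,\Phi_s$ TP, at least on the range of $\Phi_s$. If it is not invertible I would work with its CPTP extension; otherwise I use $\Phi^S_{t,s}=\Phi_t\circ\Phi_s^{-1}$, which is TP on all operators. Its adjoint ${\Phi^S_{t,s}}^*$ is then CPU. Since $\Phi_t^*={\Phi_s}^*\circ{\Phi^S_{t,s}}^*$, we get
\[
\tr\big[(\Phi_t^*[A]\otimes B)\rho\big]=\tr\big[(\Phi_s^*[\tilde A]\otimes B)\rho\big],\qquad \tilde A={\Phi^S_{t,s}}^*[A].
\]
The same holds for $A^\prime$. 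A CPU map sends a binary observable with $-\id\leqslant A\leqslant\id$ to another such observable, i.e.\ a valid (possibly noisy) binary measurement. Hence $\tilde A,\tilde A^\prime,B,B^\prime$ with the state $\rho$ is admissible for $S(\Phi_s)$, and it gives the same value, so $S(\Phi_t)\leqslant S(\Phi_s)$. If the maximization in \eqref{eq:optimal_Bell} is restricted to projective $\pm1$ observables, I invoke the standard fact that the CHSH expression is affine in each observable. The maximum over $-\id\leqslant A\leqslant\id$ is therefore attained at extremal points, which are $\pm1$-valued projective observables, so allowing unsharp observables does not change $S(\Phi_s)$.

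\emph{\heis case.} Suppose ${\Phi^H_{t,s}}^*$ is CP. It is unital, since $\Phi_t^*,\Phi_s^*$ are unital and $\Phi_s^*$ maps $\id$ to $\id$. Its predual $\Phi^H_{t,s}$ is therefore CPTP, and by Eq.~\eqref{eq:divisibility_H} we have $\Phi_t=\Phi_s\circ\Phi^H_{t,s}$. Then
\[
(\Phi_t\otimes\operatorname{id})[\rho]=(\Phi_s\otimes\operatorname{id})[\tilde\rho],\qquad \tilde\rho=(\Phi^H_{t,s}\otimes\operatorname{id})[\rho].
\]
Because $\Phi^H_{t,s}$ is CPTP, $\tilde\rho$ is a legitimate bipartite state. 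The configuration $(\tilde\rho,A,A^\prime,B,B^\prime)$ is then admissible at time $s$ with the same Bell value, and again $S(\Phi_t)\leqslant S(\Phi_s)$. The dimension of Bob's system should be fixed to match (or be unrestricted in) the maximization; it is untouched here.

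The main obstacle I expect is the bookkeeping around invertibility and trace preservation of the propagators, and whether the map is CP on the full space or only on the range of $\Phi_s$. I would handle this by assuming the propagator is a globally defined map, as in the paper's definitions, and checking TP/unitality from the composition identities. The secondary subtlety is the sharp-versus-unsharp measurement issue in the \schro case, which the extremality argument resolves.
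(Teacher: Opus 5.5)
Your proposal is correct and follows the paper's proof. In the \schro case you pull Alice's observables back through the CPU adjoint ${\Phi^S_{t,s}}^*$ and enlarge the maximization at time $s$. In the \heis case you push the state forward through the CPTP predual $\Phi^H_{t,s}$, which is exactly the paper's ``restriction on the maximisation over bipartite states.'' Your extra checks make explicit several details the paper leaves implicit: unitality of ${\Phi^H_{t,s}}^*$ from $\Phi_s^*[\id]=\id$, trace preservation of $\Phi^S_{t,s}$ via invertibility, and the convexity/extremality argument that unsharp observables do not increase $S(\Phi_s)$.
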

\begin{proof}
    {Assume first that the dynamics is \schro CP-divisible, and therefore $\Phi_t^* = \Phi_s^*\circ {\Phi_{t,s}^S}^*$ for CP ${\Phi_{t,s}^S}^*$.}
    Then
    \begin{equation}
        \begin{split}
            S(\Phi_t) &= {\max_\rho\max_{A,A^\prime,B,B^\prime}} S_{\Phi_s^*\circ {\Phi_{t,s}^S}^*[A], \Phi_s^*\circ {\Phi_{t,s}^S}^*[A^\prime],B,B^\prime}(\rho)\\
            &={\max_\rho\max_{\substack{\tilde A = {\Phi_{t,s}^S}^*[A],\\\tilde A^\prime = {\Phi_{t,s}^S}^*[A^\prime]}}\max_{B,B^\prime}} S_{\Phi_s^*[\tilde A], \Phi^*[\tilde A^\prime],B,B^\prime}(\rho)\\
            &\leqslant {\max_\rho\max_{A, A^\prime, B, B^\prime}} S_{\Phi_s^*[ A], \Phi^*[ A^\prime],B,B^\prime}(\rho) = S(\Phi_s),
        \end{split}
    \end{equation}
    where the inequality uses the fact that $\tilde A$ and $\tilde A^\prime$ are still elements of binary POVMs and the last maximization is over the enlarged space of all such elements.

    {If, instead, the dynamics is \heis CP-divisible, the proof that $S(\Phi_t)\leqslant S(\Phi_s)$ follows similarly from the restriction on the maximisation over bipartite states.}
\end{proof}
As a corollary, if the dynamics up to time $s$ is NLB, nonlocality can be {restored} only by a dynamics which is non-Markovian in both pictures.
\begin{corollary}
    \label{th:revival_nonlocality}
    Suppose that at time $s$ the channel $\Phi_s$ is NLB $S(\Phi_s)\leqslant 2$.
    If nonlocality is restored at the later time $t$, $S(\Phi_t)>2$, then the dynamics must be non-Markovian both in the \schro and in the \heis picture.
\end{corollary}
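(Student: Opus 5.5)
The plan is to argue by contraposition, using Theorem~\ref{th:contractivity_nonlocality} directly. Fix $s<t$ with $S(\Phi_s)\leqslant 2<S(\Phi_t)$. Then $S(\Phi_t)>S(\Phi_s)$. Suppose, for contradiction, that the dynamics were Markovian in at least one of the two pictures between $s$ and $t$, i.e.\ either $\Phi^S_{t,s}$ or ${\Phi^H_{t,s}}^*$ is CP. Then Theorem~\ref{th:contractivity_nonlocality} gives $S(\Phi_t)\leqslant S(\Phi_s)\leqslant 2$, which contradicts the restoration of nonlocality. Hence neither propagator is CP on $[s,t]$, so the dynamics is both \schro and \heis non-Markovian in the sense defined above.

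The steps in order are as follows. First, restate the hypothesis as a strict increase of the optimal CHSH value, $S(\Phi_t)>S(\Phi_s)$. Second, consider the \schro case: if $\Phi^S_{t,s}$ were CP, the first half of the proof of Theorem~\ref{th:contractivity_nonlocality} applies. There ${\Phi^S_{t,s}}^*$ is CPU, so it maps the effects of binary POVMs to effects of binary POVMs, and the maximisation in Eq.~\eqref{eq:optimal_Bell} can only shrink. Third, treat the \heis case in the same way. Here $\Phi_t\otimes\operatorname{id}=(\Phi_s\otimes\operatorname{id})\circ(\Phi^H_{t,s}\otimes\operatorname{id})$, and $\Phi^H_{t,s}$ is CPTP because its adjoint is CPU. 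The states $(\Phi^H_{t,s}\otimes\operatorname{id})[\rho]$ are therefore a subset of all bipartite states, and the maximum over $\rho$ can only decrease. Each case contradicts the strict increase, so both propagators must fail to be CP.

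The only point requiring care, and the step I would check most closely, is the \heis case. One must make sure that CP-ness of ${\Phi^H_{t,s}}^*$ together with unitality (which follows from $\Phi_t^*,\Phi_s^*$ being unital whenever $\Phi_s^*$ is invertible) yields a genuine trace-preserving CP predual. This is needed so that the restricted set of states is legitimate. I would also note that the argument needs no invertibility assumption beyond what is required to define the propagators, since the theorem is invoked only through the factorisations \eqref{eq:divisibility_S} and \eqref{eq:divisibility_H}. If the propagators are not uniquely defined, the statement should be read as: no CP propagator exists in either picture.
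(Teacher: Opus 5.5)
Your proposal is correct and follows essentially the paper's route: the corollary is the contrapositive of Theorem~\ref{th:contractivity_nonlocality} combined with $S(\Phi_s)\leqslant 2$, and your two cases reproduce the paper's proof of that theorem (restricting the observables through ${\Phi^S_{t,s}}^*$, respectively the states through $\Phi^H_{t,s}\otimes\operatorname{id}$). One small simplification: unitality of ${\Phi^H_{t,s}}^*$ needs no invertibility, since ${\Phi^H_{t,s}}^*[\id]={\Phi^H_{t,s}}^*[\Phi_s^*[\id]]=\Phi_t^*[\id]=\id$ follows from the factorization alone.
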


The results of Theorem \ref{th:contractivity_nonlocality} allows us to define a witness for non-Markovianity in both pictures as
\begin{equation}
    \mathcal W(\Phi) = \int_{\dot{S}   (\Phi_t)>0}dt\,\dot{S}(\Phi_t)>0.
\end{equation}
By definition, ${\mathcal W(\Phi)} >0$ requires non-Markovianity in both pictures, while it equals zero whenever at least one of the two propagators $\Phi_{t,s}^S$ or ${\Phi_{t,s}^H}^*$ is CP for all times $s\leqslant t$.
This quantifier is introduced in analogy to \cite{BLP, BLP-PRA}, but refers to failure of divisibility in both \schro and \heis pictures.

{\it Examples.}
We now provide three examples to {highlight} the relevance of the Theorem.
First, we consider a process that is CP-divisible in the \schro picture but non-Markovian in the \heis picture, presenting revivals in incompatibility but neither in entanglement nor in nonlocality.
We will then present the complementary scenario, with only \schro non-Markovianity, without revivals of incompatibility nor nonlocality.
Finally we provide an example of dynamics which is non-Markovian in both pictures, so that all quantities present revivals.

\begin{example}[\heis non-Markovian]
    \label{ex:Heis_nM}
We consider a dynamics realizable as a random mixture of unitaries
  \begin{equation}
        \label{eq:Heis_nM}
        \begin{split}
            \Phi_t[\rho] =& p_1(t)\rho + p_2(t)\,\sigma_z\rho\sigma_z + p_3(t)\,U\rho U^\dagger\\
            &+ p_4(t)\, V\rho V^\dagger,
        \end{split}
    \end{equation}
    for some probability distribution $p_i(t)$ and  $V=U\sigma_z$, with $U=e^{-i \pi \sigma_y/4}$ a rotation around the $y$ axis.
    Assume that, at time $s$, the dynamics is such that $p_1(s) = 1-p_2(s) = (1+\delta)/2$ and $p_3(s) = p_4(s) =0$, so that the resulting dynamics is of dephasing type, with dephasing parameter $\delta$, with $0<\delta\leqslant 1$.
    At time $t$, instead $p_3(t) = 1-p_4(t) = (1+\delta)/2$ and $p_1(t) = p_2(s) =0$.
    The \schro picture propagator reads $\Phi_{t,s}^S[\rho] = U \rho U^\dagger$ and is clearly CP.
    In order to show that the dynamics is \heis non-Markovian, it suffices to find a pair of POVMs such that inequality \eqref{eq:revival_incop} is violated.
    Let $A=\big\{\ketbra{+_x},\ketbra{-_x}\big\}$ and $A^\prime=\big\{\ketbra{+_y},\ketbra{-_y}\big\}$, where $\ket{\pm_{x,y}}$ are the eigenstates of $\sigma_x$ and $\sigma_y$ respectively.
    From Eq.~\eqref{eq:incop_qubit}, for any $d\leqslant1/\sqrt2$ it holds $I\big(\Phi_s^*(A),\Phi_s^*(A^\prime)\big)=0$.
    On the other hand, $\Phi_t^*(A)$ and $\Phi_t^*(A^\prime)$ are incompatible, giving
    \begin{equation}
        I\big(\Phi_t^*(A),\Phi_t^*(A^\prime)\big)>I\big(\Phi_s^*(A),\Phi_s^*(A^\prime)\big)=0
    \end{equation}
    which allows us to conclude that the process is \heis non-Markovian.
    However, we have
    \begin{equation}
        \max_{A,A^\prime}I\big(\Phi_t^*(A),\Phi_t^*(A^\prime)\big) = \max_{A,A^\prime}I\big(\Phi_s^*(A),\Phi_s^*(A^\prime)\big)
    \end{equation}
 so that  $S(\Phi_s) \leqslant S(\Phi_t)$  and there is no nonlocality revival.
\end{example}





\begin{example}[\schro non-Markovian]
    Similarly to Eq.~\eqref{eq:Heis_nM} we consider
    \begin{equation}
        \label{eq:Schro_nM}
        \begin{split}
            \Phi_t[\rho] =& p_1(t)\rho + p_2(t)\,\sigma_z\rho\sigma_z + p_3(t)\,U\rho U^\dagger\\
            &+ p_4(t)\, W\rho W^\dagger ,
        \end{split}
    \end{equation}
    with the previously considered probability distribution but now $W=\sigma_z U$.
    Then, one has ${\Phi_{t,s}^H}^*[X] = U^\dagger X U$ and therefore the dynamics is \heis Markovian.
    Nevertheless, it is \schro non-Markovian for any $\delta\ne1$, which follows from the violation of Eq.~\eqref{eq:contractivity_TD}
    \begin{equation}
        1=\norm{\Phi_t[\sigma_x]}_1 > \norm{\Phi_s[\sigma_x]}_1 = \delta.
    \end{equation}
On the other hand, due to CP of ${\Phi_{t,s}^H}^*$, the dynamics does not present any revival of nonlocality $S(\Phi_s) = S(\Phi_t) = 0$.
\end{example}

\begin{example}[Non-Markovian in both pictures]
    \label{ex:depolarizing}
    {We now analyze a depolarizing noise of the form
    \begin{equation}
        \label{eq:depolarizing}
        \Phi_t[\rho] = p(t)\,\rho + (1-p(t))\frac{\id}2\tr[\rho],\qquad 0 \leqslant p(t) \leqslant 1.
    \end{equation}
    The dynamics is IB for $p(t)\leqslant 2/3$ and also EB for $p(t)\leqslant 1/3$ \cite{Heinosaari2015-ibc}.
    Since $S(\Phi_t) = p(t)\,2\sqrt{2}$, one has $S(\Phi_t)>S(\Phi_s)$ whenever $p(t)>p(s)$, which is indeed the condition for non-Markovianity of the dynamics and, according to Theorem \ref{th:contractivity_nonlocality}, both $\Phi_{t,s}^S$ and ${\Phi_{t,s}^H}^*$ must be non CP.
    In particular, if $p(s) \leqslant 1/3$ so that $\Phi_s$ is NLB, then from Corollary \ref{th:revival_nonlocality} the non-Markovianity in both pictures allows to restore nonlocality provided $p(t) > 1/3$.

    Alternatively, one can conclude that the dynamics is non-Markovian in both pictures 
    exploiting commutativity of the dynamics, that is  $[\Phi_s,\Phi_t] = 0$, so that according to Eq.~\eqref{eq:connecting_props} we have $\Phi_{t,s}^S={\Phi_{t,s}^H}^*$.}
\end{example}

{\it DIQKD.}
Beyond their fundamental interest, our results impose strict constraints on DIQKD.
Suppose that Alice can perform one out of three measurements $A_0,A_1,A_2$ and Bob one between  $B_1$ and $B_2$. Then DIQKD ensures that they can share a secret key without the need of trusting their devices \cite{Acin2007, Scarani2009, Vazirani2014}.
The only requirement is that they obey the laws of quantum mechanics and can violate a Bell inequality.
{If the two parties choose the optimal measurement settings and initial entangled state, then the possibility of performing DIQKD under the presence of noise requires $S(\Phi)>2$, as in Eq.~\eqref{eq:optimal_Bell}.
In other words, DIQKD is not possible under NLB noise.}

Due to the noise, which can represent either an eavesdropper or simply unwanted environmental interactions, the transmitted key rate $r$ is bounded by the Devetak-Winter bound \cite{Devetak2005, Acin2007}
\begin{equation}
    \label{eq:key_rate}
    r_{DW}(\Phi) =  1- h\big(Q(\Phi)\big)-h\left(\frac{1+\sqrt{\big(S(\Phi)/2\big)^2-1}}{2}\right).
\end{equation}
Here, $h$ is the binary entropy, $Q$ is the quantum bit error rate $Q=\operatorname{prob}(a_0\ne b_1)$, and $S(\Phi)$ is the optimal Bell parameter of Eq.~\eqref{eq:optimal_Bell} obtained from the measurement settings $A_{1,2}$ and $B_{1,2}$.
If Alice and Bob cannot violate any Bell inequality, then they cannot perform DIQKD.

From Theorem \ref{th:contractivity_nonlocality} {and the fact that $r_{DW}$ is a monotonic function of $S(\Phi)$}, it follows that revivals in the bound of the QKD rate require non-Markovianity in both pictures.
Notice that if the two parties optimize their measurement choices, then $Q$ is also monotonic under \schro or \heis Markovian evolutions.
Our Theorem then immediately implies the following result.
\begin{corollary}
    \label{corr:DIQKD}
Consider a process described by a family of CPTP maps $\{\Phi_t\}_t$ and let $t\geqslant s$.
    If $r_{DW}(\Phi_t) > r_{DW}(\Phi_s)$, then the dynamics is non-Markovian in both pictures.
\end{corollary}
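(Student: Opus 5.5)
We argue by contraposition: we assume that at least one of $\Phi^S_{t,s}$ or ${\Phi^H_{t,s}}^*$ is CP and show that $r_{DW}(\Phi_t)\leqslant r_{DW}(\Phi_s)$. The structure of Eq.~\eqref{eq:key_rate} reduces this to two monotonicity statements, one for $S$ and one for $Q$. The map $x\mapsto h\big((1+x)/2\big)$ is nonincreasing for $x\in[0,1]$. The map $S\mapsto\sqrt{(S/2)^2-1}$ is increasing on $[2,2\sqrt2]$. Hence the third term of $r_{DW}$ is a nonincreasing function of $S$. Likewise, $h(Q)$ is nondecreasing in $Q$ on $[0,1/2]$, and without loss of generality the optimized error rate lies there, since Alice can relabel outcomes. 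It therefore suffices to show that $S(\Phi_t)\leqslant S(\Phi_s)$ and $Q(\Phi_t)\geqslant Q(\Phi_s)$. For the nonlocal regime $S\leqslant 2$ we take the convention that the key rate is zero, i.e. DIQKD is impossible, so monotonicity holds trivially there.

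The first inequality is exactly Theorem~\ref{th:contractivity_nonlocality}. For $Q$, we repeat the argument of that proof with the same two cases. Here $Q(\Phi)=\min_{\rho,A_0,B_1}\operatorname{prob}(a_0\neq b_1)$ for the state $(\Phi\otimes\operatorname{id})[\rho]$, and this probability can be written as $\tfrac12\big(1-\braket{\Phi^*[A_0]\,B_1}_\rho\big)$.

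In the \schro CP-divisible case we have $\Phi_t^*[A_0]=\Phi_s^*\big[{\Phi^S_{t,s}}^*[A_0]\big]$. Since ${\Phi^S_{t,s}}^*$ is CPU, $\tilde A_0={\Phi^S_{t,s}}^*[A_0]$ is again a binary observable. The minimum at time $t$ is therefore taken over a subset of the feasible set at time $s$, which gives $Q(\Phi_t)\geqslant Q(\Phi_s)$.

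In the \heis CP-divisible case, the adjoint propagator $\Phi^H_{t,s}$ is CPTP and $\Phi_t=\Phi_s\circ\Phi^H_{t,s}$. Then $(\Phi_t\otimes\operatorname{id})[\rho]=(\Phi_s\otimes\operatorname{id})[\tilde\rho]$ with $\tilde\rho=(\Phi^H_{t,s}\otimes\operatorname{id})[\rho]$ a legitimate state. The minimization at time $t$ is thus again restricted relative to the one at time $s$.

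The main obstacle is a consistency issue. $S$ and $Q$ are optimized separately, but in the protocol the same state $\rho$ is used for both, with $A_0$ an extra setting. If $r_{DW}$ is instead defined with a joint optimization over $(\rho,A_0,A_1,A_2,B_1,B_2)$, the term-by-term argument above does not directly apply. I would handle this by applying the same restriction argument to the whole tuple at once. Under \schro CP-divisibility, every time-$t$ strategy $(\rho,A_i,B_j)$ equals the time-$s$ strategy $(\rho,{\Phi^S_{t,s}}^*[A_i],B_j)$. Under \heis CP-divisibility, it equals the time-$s$ strategy $(\tilde\rho,A_i,B_j)$. In either case the achievable pairs $(Q,S)$ at time $t$ form a subset of those at time $s$, and so the optimal $r_{DW}$ cannot increase. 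Contraposition then gives non-Markovianity in both pictures whenever $r_{DW}(\Phi_t)>r_{DW}(\Phi_s)$.
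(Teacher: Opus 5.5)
Your proposal is correct and follows the paper's route. You argue by contraposition, using Theorem~\ref{th:contractivity_nonlocality} for $S$, the same feasible-set restriction for the optimized $Q$ (on Alice's observables in the \schro case, on states in the \heis case), and monotonicity of $r_{DW}$ in $S$ and $Q$, which the paper only asserts in one line; your joint-optimization paragraph usefully makes explicit that every achievable $(Q,S)$ at time $t$ is also achievable at time $s$.

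One small fix is needed. Setting the rate to zero only for $S\leqslant 2$ (the \emph{local}, not nonlocal, regime) is not monotone across $S=2$, because the formula tends to $-h(Q)\leqslant 0$ as $S\to 2^{+}$. Use $\max\{0,r_{DW}\}$ instead, or rely on your inclusion argument, which works for any fixed function of $(Q,S)$.
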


\begin{figure}[!t]
    \centering
    \includegraphics[width=\linewidth]{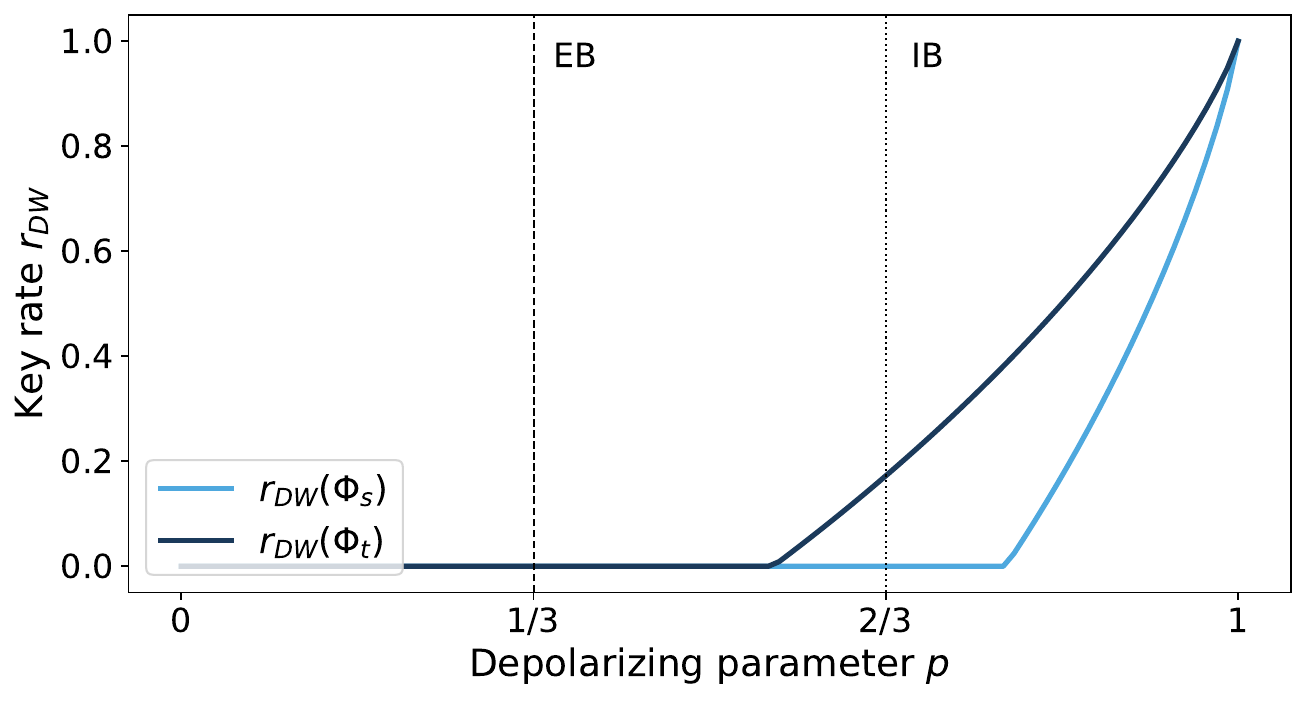}
    \caption{Key rate for the DIQKD of Example \ref{ex:QKD} as a function of the depolarizing parameter $p$ of Eq.~\eqref{eq:depolarizing}, both at time $s$ (light blue) and $t$ (dark blue), as in Eqs.~\eqref{eq:key_rate_dep} and \eqref{eq:key_rate_dep_fin}.
Whenever $r_{DW}(\Phi_t)>r_{DW}(\Phi_s)$ one can conclude that the dynamics is non-Markovian both in the \schro and in the \heis picture, as in Corollary \ref{corr:DIQKD}.
    The dashed (dotted) vertical line is the value of $p$ for which the $\Phi_s$ becomes EB (IB).
    }
    \label{fig:QKD}
\end{figure}

\begin{example}[DIQKD]
    \label{ex:QKD}
    {Consider the same dynamics of Example \ref{ex:depolarizing}, in which $\Phi_t$ is a depolarizing channel.}
    Then, for the optimal measurements setting $A_0=B_1=\sigma_z$, $A_1=(\sigma_z+\sigma_x)/\sqrt{2}$, $A_2=(\sigma_z-\sigma_x)/\sqrt{2}$, $B_2=\sigma_x$ it holds that \cite{Acin2007}
    \begin{equation}
        r_{DW}(\Phi_t)= 1- h\big(Q(\Phi_t)\big)-h\left(Q(\Phi_t) + \frac{S(\Phi_t)}{2\sqrt{2}}\right)
    \end{equation}
    and $S(\Phi_t) = 2\sqrt{2}\big(1-2Q(\Phi_t)\big)$.
    {For the sake of definitiveness, we let the depolarizing parameter at time $s$ be $p(s)=p\in[0,1]$ and fix $p(t) = (1+p)/2>p(s)$, so that the dynamics is non-Markovian in both pictures and the depolarizing noise is partially removed.}
    At time $s$, the key rate bound is a monotonically increasing function of the depolarizing parameter $p$ 
    \begin{equation}
        \label{eq:key_rate_dep}
        r_{DW}(\Phi_s) = 1-h\left(\frac{1-p}2\right)-h\left(\frac{1+p}2\right).
    \end{equation}
    At time $t$ it holds that
    \begin{equation}
        \label{eq:key_rate_dep_fin}
        r_{DW}(\Phi_t) = 1-h\left(\frac{1-p}4\right)-h\left(\frac{3+p}4\right)>r_{DW}(\Phi_s)
    \end{equation}
    whenever $p\ne1$, which is equivalent to non-Markovianity in both pictures.
    {The obtained key rates are presented in Fig.~\ref{fig:QKD} as a function of the depolarizing parameter $p$.
    }
\end{example}

In this example, the optimal measurements $A_i$ and $B_i$ as well as state $\rho$ are the same at any time.
In general, instead, one might have to make  different choices at different times, however the same considerations hold in a straightforward way.


{\it Conclusions and outlook.} 
In this Letter, we have established a direct connection between revivals over time of Bell nonlocality and the recently introduced distinction between \schro and \heis non-Markovianity.
We showed that in order to restore the nonlocality lost at an intermediate time the dynamics must be non-Markovian both in the \schro and in the \heis picture.

Indeed, while \schro non-Markovianity can restore entanglement and \heis non-Markovianity can restore incompatibility, Bell nonlocality can present revivals only when both mechanisms are present together.
We further showed that the same conclusion applies to DIQKD, for which any revival in the Devetak-Winter bound for the key rate requires non-Markovianity in both pictures.
In this sense, our work provide an operational task that shows the intrinsic relevance of memory in both pictures.

Our work opens several directions for future research.
An important question concerns the investigation of relevant tasks when considering the extension of these results to multipartite Bell scenarios, and higher-dimensional systems or generalized measurement settings.
More broadly, our results further support the viewpoint that \schro and \heis non-Markovianity capture genuinely different physical aspects of memory effects in open quantum systems.

\section*{Acknowledgments}
FS acknowledges support from Magnus Ehrnroothin S\"a\"ati\"o.
DC was supported by the Polish National Science Center under Projects No. 2024/55/B/ST2/01781.
The authors thank the Toru\'n group and the Aleksander Jab\l o\'nski Foundation for hospitality received.

\bibliography{biblio}

\end{document}